\newcommand{\prob}{{\cal P}}
        \newtheorem{definition}{Definition}%
        \newtheorem{theorem}{Theorem}%
        \newtheorem{lemma}{Lemma}%
\DeclareOldFontCommand{\rm}{\normalfont\rmfamily}{\mathrm}
\DeclareOldFontCommand{\sf}{\normalfont\sffamily}{\mathsf}
\DeclareOldFontCommand{\tt}{\normalfont\ttfamily}{\mathtt}
\DeclareOldFontCommand{\bf}{\normalfont\bfseries}{\mathbf}
\DeclareOldFontCommand{\it}{\normalfont\itshape}{\mathit}
\DeclareOldFontCommand{\sl}{\normalfont\slshape}{\@nomath\sl}
\DeclareOldFontCommand{\sc}{\normalfont\scshape}{\@nomath\sc}
\newtheorem{example}{Example}
\newcommand{\comment}[1]{}
\newcommand{\bml}[1]{\begin{multline}\label{#1}}
\newcommand{\eml}{\end{multline}}
\newcommand{\beq}[1]{\begin{equation}\label{#1}}
\newcommand{\eeq}{\end{equation}}
\newcommand{\beann}{\begin{eqnarray*}}
\newcommand{\eeann}{\end{eqnarray*}}
\newcommand{\bea}[1]{\begin{eqnarray}\label{#1}}
\newcommand{\eea}{\end{eqnarray}}
\newcommand{\bmp}{\begin{minipage}}
\newcommand{\emp}{\end{minipage}}
\newcommand{\eqref}[1]{(\ref{#1})}
\newcommand{\definref}[1]{{\itshape Definition~\ref{#1}}}
\newcommand{\theorref}[1]{{\itshape Theorem~\ref{#1}}}
\newcommand{\secref}[1]{Section~\ref{#1}}
\newcommand{\fig}[1]{{Fig.~\ref{#1}}}
\newcommand{\exref}[1]{{\itshape Example~\ref{#1}}}
\def\SET0N {I\hspace{-0.8ex}N_0}
\newsavebox{\Citname}
\newcommand{\ignore}[1]{}
\begin{document}

\title{One-Shot Capacity of Discrete Channels}

\author{
\authorblockN{Rui A. Costa}
\authorblockA{Instituto de Telecomunica\c{c}\~oes\\
Faculdade de Ci\^encias da\\Universidade do Porto, Portugal\\
Email: rfcosta@fe.up.pt}
\and
\authorblockN{Michael Langberg}
\authorblockA{Computer Science Division\\
Open University of Israel\\
Email: mikel@openu.ac.il}
\and
\authorblockN{Jo\~ao Barros}
\authorblockA{Instituto de Telecomunica\c{c}\~oes\\
Faculdade de Engenharia da\\Universidade do Porto, Portugal\\
Email: jbarros@fe.up.pt}
}

\maketitle

\begin{abstract}
Shannon defined channel capacity as the highest rate at which there exists a sequence of codes of block length $n$ such that the error probability goes to zero as $n$ goes to infinity. In this definition, it is implicit that the block length, which can be viewed as the number of available channel uses, is unlimited. This is not the case when the transmission power must be concentrated on a single transmission, most notably in military scenarios with adversarial conditions or delay-tolerant networks with random short encounters. A natural question arises: how much information can we transmit in a single use of the channel? We give a precise characterization of the one-shot capacity of discrete channels, defined as the maximum number of bits that can be transmitted in a single use of a channel with an error probability that does not exceed a prescribed value. This capacity definition is shown to be useful and significantly different from the zero-error problem statement.
\end{abstract}

\section{Introduction}

Shannon's notion of channel capacity~\cite{shannon:48} is asymptotic in the sense that the number of channels uses (or, equivalently, the block length of the code) can be arbitrarily large. The rate of a code is defined as the ratio between the number of input symbols and the number of channel uses required to transmit them. A rate is said to be achievable if there exists a sequence of codes (of that rate) with block length $n$, whose error probability goes to zero asymptotically as $n$ goes to infinity. The behavior of the channel capacity with a limited number of channel uses is less well understood. A typical approach is to consider the rate at which the error probability decays to zero, which motivates the study of {\it error exponents} \cite{shannon:67},\cite{gallager:65},\cite{korner:82}.
Beyond the aforementioned definitions, it is also reasonable to ask what rates can be achieved when the error probability must be precisely zero. In \cite{shannon:56}, Shannon assumes once again that the channel is available as many times as necessary and defines the zero-error capacity as the supremum of the independence numbers of the extensions of the confusion graph~\cite{lovasz:79,orlitsky:98}.

A different question is how much information can we convey in a single use of the channel or, in other words, what is the {\it one-shot capacity} of the channel. The question arises for example when the transmission power must be concentrated on a single transmission, most notably in military scenarios with adversarial conditions or delay-tolerant networks with random short encounters. As we have seen, classical definitions of capacity do not encompass these scenarios.

The one-shot capacity problem can also be viewed as a special instance of the single-letter coding problem since, in both problems, the encoder must assign to every source output symbol one channel input symbol (and not a sequence of them). However, to the best of our knowledge, studies on single-letter coding use optimality criteria based on an unlimited number of channel uses. For instance, \cite{gastpar:03} characterizes optimal single-letter source-channel codes, with respect to the rate-distortion and capacity-cost functions, which are of asymptotic nature. For comparison, in our study of the one-shot capacity, only a single channel use is considered.

Using a combinatorial approach, the zero-error one-shot capacity of a given channel was considered in \cite{lovasz:79,orlitsky:98}. More specifically, \cite{lovasz:79,orlitsky:98} construct a certain undirected graph $G$ corresponding to the channel at hand; and characterize the zero-error one-shot capacity by the size of the maximum independent set in $G$.

In this work we generalize the results and combinatorial framework of \cite{lovasz:79,orlitsky:98} to capture communication that allows an error probability below $\epsilon$; namely, we study the {\em $\epsilon$-error one-shot} capacity.  We note that preliminary results on the $\epsilon$-error one-shot capacity appear in~\cite{renner:06}, which uses smooth min-entropy and a probabilistic approach to develop bounds for the one-shot capacity (also called, single-serving channel capacity). Our work differs from~\cite{renner:06} in that we characterize the exact value of the $\epsilon$-error one-shot capacity by means of classical combinatorics.

Our main contributions are as follows:
\begin{itemize}
\item {\it Problem Formulation:} We provide a rigorous mathematical framework for analyzing the $\epsilon$-capacity of discrete channels subject to a one-shot constraint. We consider two different metrics of performance: maximum error probability and average error probability.
\item {\it Operational Interpretation:} We illustrate the practical relevance of the one-shot capacity by means of examples where the zero-error one-shot capacity and the $\epsilon$-error one-shot capacity present significantly distinct behaviors.
\item {\it Combinatorial description of the One-Shot Capacity of Discrete Channels:} We cast the capacity in terms of the properties of a special graph $G$ derived from the channel. For maximum error, we describe the one-shot capacity through the independence number of $G$, whereas for average error we consider the maximum size of {\em sparse} sets in $G$.
\item {\it Complexity Analysis:} We show that the problem of computing the one-shot capacity is NP-Hard.
\end{itemize}

The remainder of the paper is organized as follows. In \secref{sec:problem}, we give a formal definition of the problem at hand, namely the concepts of $\epsilon$-maximum and $\epsilon$-average one-shot capacity. In \secref{sec:examples2} we present a non-trivial example of a class of channels for which the one-shot capacity is relevant. Our main result for maximum error one-shot capacity is stated and proved in \secref{sec:main}. In \secref{sec:nphard}, we prove that computing the $\epsilon$-maximum one-shot capacity is NP-Hard. Finally, in \secref{sec:average} we discuss the case of $\epsilon$-average one-shot capacity and \secref{sec:conclu} concludes the paper.

\section{Problem Statement}
\label{sec:problem}

We start our problem statement with the usual definition of a discrete channel.

\begin{definition}
A discrete {\it channel} is composed of an input alphabet $\cal{X}$, an output alphabet $\cal{Y}$ and the transition probabilities $\prob(Y=y|X=x)$.
\end{definition}

We will refer to such a channel as ``the channel described by $P_{Y|X}$".
Next, we present the definition of a one-shot communication scheme over a discrete channel.

\begin{definition}
A {\it one-shot communication scheme} over a $P_{Y|X}$ channel is composed of a codebook $\underline{\cal{X}} \subseteq \cal{X}$, and a decoding function $\gamma:\cal{Y}\rightarrow \underline{\cal{X}}$.
\end{definition}

We will refer to such a communication scheme as the ``$(\underline{\cal{X}},\gamma)$ pair". It is natural to view the set $\underline{\cal{X}}$ as the set of messages to be transmitted over the channel. Our figure of merit is the probability of error in the decoding process. We consider two different metrics: maximum and average error probability.

\begin{definition}
The {\it maximum error probability} associated with a pair $(\underline{\cal{X}},\gamma)$ is defined as \vspace{-0.25cm}$$\epsilon_{\underline{\cal{X}},\gamma}=\max_{x\in \underline{\cal{X}}} \prob(\gamma(Y)\neq x |X=x).\vspace{-0.15cm}$$
\end{definition}

\begin{definition}
The {\it average error probability} associated with a pair $(\underline{\cal{X}},\gamma)$ is defined as \vspace{-0.15cm}$$\bar{\epsilon}_{\underline{\cal{X}},\gamma}=\frac{1}{|\underline{\cal{X}}|}\sum_{x \in \underline{\cal{X}}} \prob(\gamma(Y)\neq x |X=x).\vspace{-0.15cm}$$ 
\end{definition}

We are now ready to define the one-shot capacity of a discrete channel. From an intuitive point of view, we are intrigued by the maximum number of distinct messages (the size of the codebook $\underline{\cal{X}}$) that can be transmitted in a single use of the channel, while ensuring that the error probability (maximum or average) does not exceed a prescribed value $\epsilon$. We must first define an admissible $(\underline{\cal{X}},\gamma)$ pair.

\begin{definition}
Consider a channel described by $P_{Y|X}$ and let $\epsilon \in [0,1]$. The pair $(\underline{\cal{X}},\gamma)$ is {\it maximum-$\epsilon$-admissible} if $\epsilon_{\underline{\cal{X}},\gamma}\leq \epsilon.$ The set of all $\epsilon$-admissible pairs is denoted by $\cal{A}_\epsilon$.
\end{definition}

\begin{definition}
Consider a channel described by $P_{Y|X}$ and let $\epsilon \in [0,1]$. The pair $(P_X,\gamma)$ is {\it average-$\epsilon$-admissible} if $\bar{\epsilon}_{\underline{\cal{X}},\gamma} \leq \epsilon$. The set of all average-$\epsilon$-admissible pairs is denoted by $\bar{\cal{A}_\epsilon}$.
\end{definition}

The notion of single-serving capacity is outlined in~\cite{renner:06} as ``{\it
the maximum number of bits that can be transmitted in a single use of $P_{Y|X}$, such that every symbol can be decoded by an error of at most $\epsilon$}". We formalize this notion by defining the $\epsilon$-maximum one-shot capacity as follows:

\begin{definition}
Consider a channel described by $P_{Y|X}$. For $\epsilon \in [0,1]$, the {\it $\epsilon$-maximum one-shot channel capacity}, denoted by $C_{\epsilon}$, is defined as \vspace{-0.3cm} $$C_{\epsilon}=\max_{(\underline{\cal{X}},\gamma) \in \cal{A}_\epsilon} \log(|\underline{\cal{X}}|).\footnote{Throughout this work, all the logarithms are considered in base 2.}$$
\end{definition}

Similarly, we can define the $\epsilon$-average one-shot capacity as follows:

\begin{definition}
\label{def:average}
Consider a channel described by $P_{Y|X}$. For $\epsilon \in [0,1]$, the {\it $\epsilon$-average one-shot channel capacity}, denoted by $C_{\epsilon}$, is defined as \vspace{-0.25cm} $$\bar{C}_{\epsilon}= \max_{(\underline{\cal{X}},\gamma) \in\bar{\cal{A}_\epsilon}} \log(|\underline{\cal{X}}|) $$
\end{definition}

Our goal is to provide a precise characterization of the one-shot capacity.

\section{Practical Relevance of the One-Shot Capacity}
\label{sec:examples2}

So far, we have formally defined the concept of the $\epsilon$-error one-shot capacity. One question that naturally arises is the following: does the $\epsilon$-error one-shot capacity significantly differ from the zero-error one-shot capacity, for small values of $\epsilon$? In other words, are there classes of channels for which allowing a small error probability enables the transmission of a significantly larger number of bits than in the zero-error case? In this section, we present a class of channels for which the answer to the previous questions is yes, which asserts for the practical relevance of the $\epsilon$-error one-shot capacity notion. Our examples use the maximum error criterion (and thus imply the gap for average error also).

\begin{example}
\label{ex:first}
Consider a channel with input alphabet $\cal{X}=$$\{0,1,\dots,n-1\}$, output alphabet $\cal{Y}=$$\{0,1,\dots,n-1\}$, and let $0< e_1<e_2<\dots <e_{n-1}\leq 1$. Let $\prob(Y=0|X=0)=1$. For each $i\in \textrm{$\cal{X}$}\setminus \{0\}$, let the transition probability distribution be
\vspace{-0.3cm}
\begin{displaymath}
P(Y=y|X=i) = \left\{ \begin{array}{ll}
 1-e_i & \textrm{if $y=i$}\\
 e_i & \textrm{if $y=0$}\\
 0 & \textrm{otherwise}
  \end{array} \right.
\end{displaymath}
\end{example}

In \fig{fig:example}, we present an example of a channel in this class. Notice that, given that all symbols are "confusable" (i.e. $\prob(Y=0|X=i)>0, \: \forall i\in \textrm{$\cal{X}$}$), the zero-error one-shot capacity of this channel is zero, $\forall n$. However, by allowing a small error probability, we are able to transmit a significant number of bits.

\begin{lemma}
\label{le:oneshot-example}
The $\epsilon$-maximum one-shot capacity of the channel described in \exref{ex:first} satisfies
\vspace{-0.18cm}
\begin{eqnarray}
\label{eq:example}
C_\epsilon = \log (i+1) \textrm{ for $i$ s.t. $e_{i}\leq \epsilon < e_{i+1}$},
\end{eqnarray}
\vspace{-0.55cm}

where $e_0=0$ and $e_n=1$.
\end{lemma}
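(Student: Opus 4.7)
The plan is to prove matching lower and upper bounds of $i+1$ on the maximum codebook size. Throughout, fix $i$ with $e_i \leq \epsilon < e_{i+1}$.

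For the lower bound, I would exhibit an explicit achievable scheme. Take $\underline{\cal{X}} = \{0, 1, \ldots, i\}$ and define $\gamma(y) = y$ for $y \in \{0, 1, \ldots, i\}$ and $\gamma(y)$ arbitrary otherwise. Then the error for input $0$ is $0$ (since $\prob(Y=0|X=0)=1$ and $\gamma(0)=0$), and for each $j \in \{1, \ldots, i\}$ the error is exactly $e_j \leq e_i \leq \epsilon$. Hence $(\underline{\cal{X}}, \gamma) \in \cal{A}_\epsilon$ with $|\underline{\cal{X}}| = i+1$, giving $C_\epsilon \geq \log(i+1)$.

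For the upper bound, I would argue by a short case analysis on the value of $\gamma(0)$, exploiting the fact that every non-zero input $j$ produces output $0$ with probability $e_j > 0$, so the decoded value $\gamma(0)$ dictates which single non-zero message gets to ``free-ride'' on output $0$. Concretely, for any $x \neq 0$ in $\underline{\cal{X}}$, the error probability equals $(1-e_x)\,[\gamma(x)\neq x] + e_x\,[\gamma(0)\neq x]$, which is minimized at $e_x$ whenever $\gamma(0) \neq x$. Split into two cases.

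\textbf{Case A: $\gamma(0)=0$.} Then for every $x \in \underline{\cal{X}} \setminus \{0\}$ the error is at least $e_x$, so admissibility forces $e_x \leq \epsilon$, i.e.\ $x \in \{1,\ldots,i\}$. Hence $\underline{\cal{X}} \subseteq \{0,1,\ldots,i\}$ and $|\underline{\cal{X}}| \leq i+1$. \textbf{Case B: $\gamma(0) = k$ for some $k \neq 0$.} Then $0$ cannot be in $\underline{\cal{X}}$ (otherwise the error at $0$ would be $1 > \epsilon$, assuming $\epsilon < 1$; the boundary case $\epsilon = 1$ is trivial since $i+1 = n$). Every $x \in \underline{\cal{X}} \setminus \{k\}$ again satisfies $\gamma(0) \neq x$, forcing $e_x \leq \epsilon$ and hence $x \in \{1,\ldots,i\}$. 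Therefore $\underline{\cal{X}} \subseteq \{k\} \cup \{1,\ldots,i\}$, again of size at most $i+1$.

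The main (minor) obstacle is handling the boundary values $\epsilon = 0$ and $\epsilon = 1$ under the conventions $e_0 = 0$ and $e_n = 1$; for $\epsilon < e_1$ we get $i = 0$ and the argument correctly yields $|\underline{\cal{X}}| \leq 1$, while for $\epsilon = 1$ every message is admissible and the bound $\log n = \log(i+1)$ matches. Combining the two bounds gives $C_\epsilon = \log(i+1)$, establishing \eq{eq:example}.
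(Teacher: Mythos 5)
Your proof is correct and follows essentially the same route as the paper's: an explicit size-$(i+1)$ scheme for the lower bound, and a converse resting on the single observation that output $0$ can lie in the decoding region of only one codeword, so every other codeword $x$ must have error at least $e_x$ and hence $e_x\leq\epsilon$. The only cosmetic differences are that your codebook is $\{0,\dots,i\}$ while the paper uses $\{1,\dots,i+1\}$ with output $0$ decoded to $i+1$, and your converse is organized as a case split on $\gamma(0)$ rather than the paper's count of codewords whose preimage must contain $0$; incidentally, your construction sidesteps a small boundary glitch in the paper's, whose codebook $\{1,\dots,i+1\}$ at $i=n-1$ would include the nonexistent input symbol $n$.
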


\begin{proof}
We start by proving that the $\epsilon$-maximum one-shot capacity, $C_\epsilon$, is lower bounded by \eqref{eq:example}. Let $e_{i}\leq \epsilon < e_{i+1}$, for some $i\in \{1,\dots,n-1\}$ (the case $i=0$ is trivial, since by definition $C_\epsilon \geq 0$). Consider the codebook $\underline{\cal{X}}=\{1,\dots,i+1\}$ and the following decoding function:
\begin{displaymath}
\gamma(y) = \left\{ \begin{array}{ll}
 y & \textrm{if $y \in \{1,\dots,i+1\}$}\\ 
 i+1 & \textrm{otherwise}
\end{array} \right.
\end{displaymath}

For $x\in\{1,\dots,i\}$, we have that $\gamma^{-1}(x)=\{x\}$ (where $\gamma^{-1}(x)=\{y\in \textrm{$\cal{Y}$}: \gamma(y)=x \}$) and, thus, $\prob(\gamma(Y)\neq x|X=x)= e_x \leq \epsilon$, because we have that $e_{i}\leq \epsilon < e_{i+1}$ and $e_x\leq e_i$ (because $x\leq i$). With respect to $x=i+1$, we have that $\gamma^{-1}(i+1)=\{0,i+1,\dots,n-1\}$. Moreover, $\prob(Y=i+1|X=i+1)=1-e_{i+1}$ and $\prob(Y=0|X=i+1)=e_{i+1}$. Therefore,\vspace{-0.2cm}$$\prob(\gamma(Y)\neq i+1|X=i+1)=1-\hspace{-0.5cm}\displaystyle\sum_{y\in \gamma^{-1}(i+1)}\hspace{-0.5cm}\prob(Y=y|X=i+1)=0.$$ Hence, we have constructed a pair $(\underline{\cal{X}},\gamma)$ for which $|\underline{\cal{X}}|=i+1$ and $\epsilon_\gamma=\displaystyle\max_{x\in \underline{\cal{X}}}\prob(\gamma(Y)\neq x|X=x)\leq \epsilon$.

Now, we show that $C_\epsilon$ is upper bounded by \eqref{eq:example}. Let $e_{i}\leq \epsilon < e_{i+1}$, for some $i\in \{0,\dots,n-1\}$, and let $(\underline{\cal{X}},\gamma)$ be a pair for which $C_\epsilon=\log |\underline{\cal{X}}|$. Notice that for $x\in \{0\} \cup \{i+1,\dots,n-1\}$, we have $\prob(Y=0|X=x)\geq \epsilon$. Therefore, if $x\in \underline{\cal{X}}\cap (\{0\} \cup \{i+1,\dots,n-1\})$, we must have $0\in \gamma^{-1}(x)$. Thus, since $\gamma$ is a function, we have that $|\underline{\cal{X}}\cap (\{0\} \cup \{i+1,\dots,n-1\})|\leq 1$, which implies that $|\underline{\cal{X}}|\leq i+1$, thus concluding our proof.
\end{proof}

The previous example shows that, by allowing for a small probability of error in the decoding process, we are able to transmit a significantly higher number of bits in one use of the channel, in comparison with the case where no errors are allowed. In the case illustrated in \fig{fig:example}, we have that the $\epsilon$-maximum one-shot capacity verifies

\begin{displaymath}
C_\epsilon = \left\{ \begin{array}{ll}
 0 & \textrm{if $\epsilon<0.01$}\\
 1 & \textrm{if $0.01\leq \epsilon<0.02$}\\
  \log(3) & \textrm{if $\epsilon \geq 0.02$}
  \end{array} \right.
\end{displaymath}

\section{The Case of Maximum Error Probability}
\label{sec:main}

In this section, we present a combinatorial description of the one-shot capacity under maximum error $\epsilon$. We start by defining the graph that will help us obtain the desired description. For that, we first need to use the following definition which associates with each input symbol $x$ a set of output symbols denoted by $D_\epsilon(x)$.

\begin{definition}
For each $x\in \cal{X}$, let \vspace{-0.2cm}$$D_{\epsilon}(x)=\left\{D \subset {\cal{Y}}: \sum_{y \in D} \prob(Y=y|X=x)\geq 1-\epsilon \right\}.$$
\end{definition}

We can view $D_{\epsilon}(x)$ as the set of all possible inverse images of $x$ through a decoding function $\gamma$ (i.e. all possible $\gamma^{-1}(x)$), with $\gamma$ verifying $\epsilon_{\underline{\cal{X}},\gamma}\leq \epsilon$. We are now ready to present the definition of the maximum-one-shot graph of the channel described by $P_{Y|X}$.

\begin{definition}
The {\it maximum-one-shot graph} of the channel described by $P_{Y|X}$ is the graph $G_{\epsilon}$ (with node set $V$ and edge set $E_{\epsilon}$) constructed as follows:
\begin{itemize}
\item the nodes are the elements of the form $(x,D)$ with $x\in \cal{X}$ and $D \in D_{\epsilon}(x)$;
\item two nodes $(x,D)$ and $(x',D')$ are connected if and only if $x=x'$ or $D\cap D' \neq \emptyset$.
\end{itemize}
\end{definition}

\begin{figure}
\centering
\subfigure[]{
  \label{fig:example}
  \includegraphics[width=3.4cm]{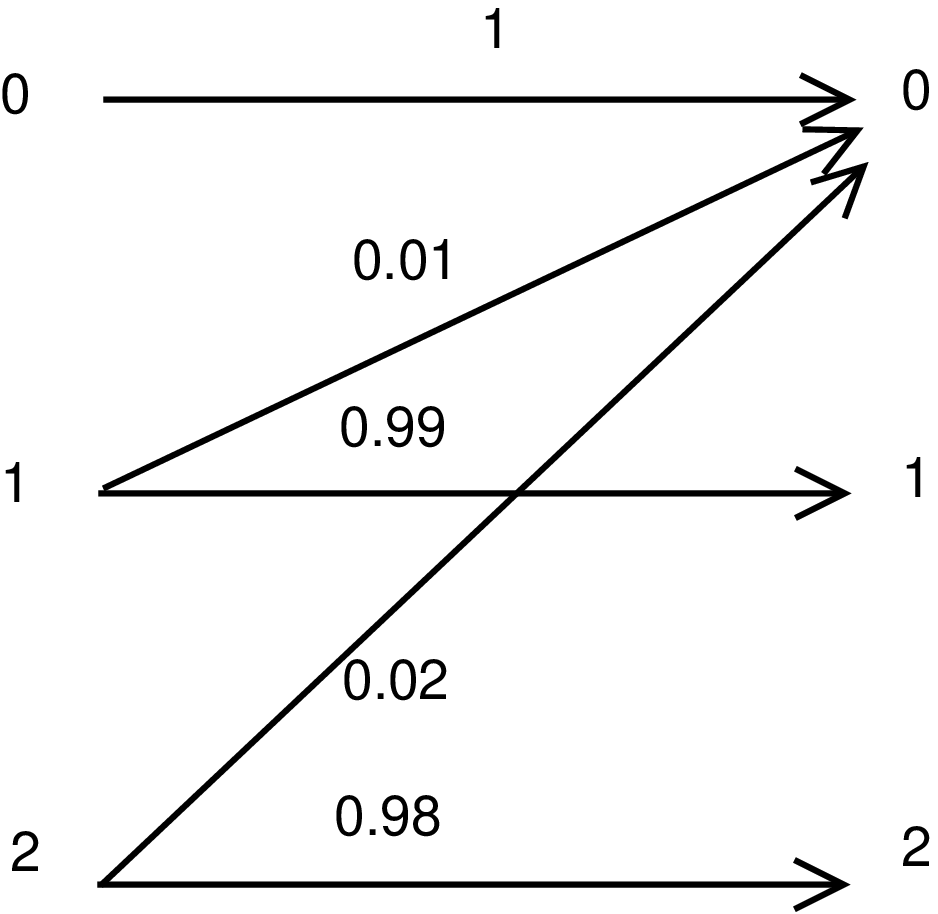}
}
\subfigure[]{
  \label{fig:one_shot_graph}
  \includegraphics[width=4cm]{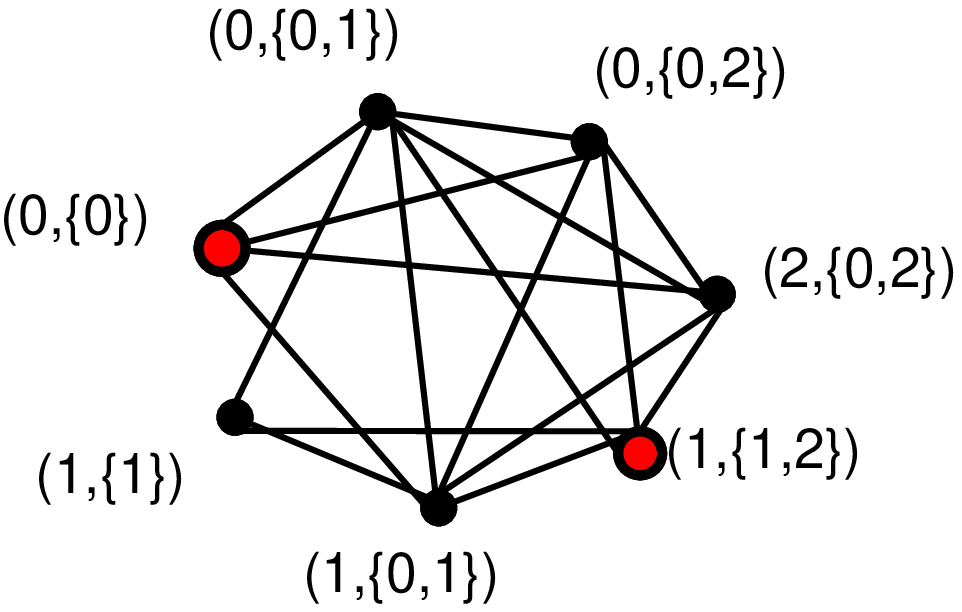}
}
\caption{In (a), we present an instance of the class of channels defined in \exref{ex:first}, with $n=3$, $e_1=0.01$ and $e_2=0.02$. In (b), we present the maximum-one-shot graph of the channel described in (a), for $\epsilon=0.01$. The nodes in red form a maximum independent set, which by \theorref{th:main} implies that the $\epsilon$-maximum one-shot capacity of the channel in (a) is $\log(2)=1$. For the sake of clarity, we excluded the nodes $(0,\{0,1,2\})$, $(1,\{0,1,2\})$ and $(2,\{0,1,2\})$, since these nodes are connected to every other node in the graph and, thus, are not part of any maximum independent set.}
\end{figure}

In \fig{fig:one_shot_graph}, we present the maximum-one-shot graph of the channel in \fig{fig:example}. Due to the definition of $D_\epsilon(x)$, in the maximum-one-shot graph, nodes represent all the possible $\gamma^{-1}(x)$ (such that $\gamma$ is $\epsilon$-admissible). To obtain a proper decoding function from the maximum-one-shot graph, we need to find an {\it independent set}, since a connection between two nodes represents the incompatibility of two inverse images.

\begin{definition}
Consider a graph $G=(V,E)$. An {\it independent set} $I_G$ in $G$ is a set of nodes $v\in V$ in which no two nodes are connected by an edge.
\end{definition}

\begin{definition}
Consider a graph $G=(V,E)$. A {\it maximum independent set} $I_G$ is a largest independent set and its cardinality is called the {\it independence number} of the graph $G$, and it is denoted by $\alpha(G)$.
\end{definition}

Using these definitions, we are now able to state our main result, which relates the one-shot capacity with the independence number of the previously defined graph.

\begin{theorem}
\label{th:main}
Consider a channel described by $P_{Y|X}$ and the corresponding maximum-one-shot graph $G_{\epsilon}=(V,E_{\epsilon})$, with $\epsilon \in [0,1)$. The $\epsilon$-maximum one-shot capacity satisfies \vspace{-0.2cm}$$C_\epsilon=\log(\alpha(G_{\epsilon})).\vspace{-0.2cm}$$
\end{theorem}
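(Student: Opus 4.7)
The plan is a standard two-sided argument: construct an admissible pair from an independent set for the lower bound, and extract an independent set from an admissible pair for the upper bound. The definition of $G_\epsilon$ is rigged so that the two structural conditions on nodes of an independent set mirror exactly the two conditions required of an admissible decoding rule: distinct codewords, and pairwise disjoint pre-images.

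For the direction $C_\epsilon \ge \log(\alpha(G_\epsilon))$, I will take a maximum independent set $I \subseteq V$ in $G_\epsilon$ and read off a codebook together with a decoder. Define $\underline{\cal{X}} = \{x : (x,D) \in I \text{ for some } D\}$; since any two nodes sharing the first coordinate are adjacent, the map $(x,D)\mapsto x$ is injective on $I$, so $|\underline{\cal{X}}|=|I|=\alpha(G_\epsilon)$. For each $(x,D)\in I$ put $\gamma(y)=x$ for every $y\in D$; the independence of $I$ guarantees the $D$'s are pairwise disjoint, so this is well-defined. For $y$ not covered by any $D$ in $I$, assign $\gamma(y)$ to any fixed element of $\underline{\cal{X}}$; this only enlarges the pre-images of codewords and so can only help. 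Because $D\in D_\epsilon(x)$, we get $\prob(\gamma(Y)=x\mid X=x)\ge \sum_{y\in D}\prob(Y=y\mid X=x)\ge 1-\epsilon$ for every $x\in\underline{\cal{X}}$, hence $\epsilon_{\underline{\cal{X}},\gamma}\le \epsilon$ and $(\underline{\cal{X}},\gamma)\in\cal{A}_\epsilon$.

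For the direction $C_\epsilon \le \log(\alpha(G_\epsilon))$, I take any $(\underline{\cal{X}},\gamma)\in\cal{A}_\epsilon$ achieving the capacity and build the obvious candidate independent set. For each $x\in\underline{\cal{X}}$ set $D_x = \gamma^{-1}(x)\subseteq\cal{Y}$. Admissibility gives $\sum_{y\in D_x}\prob(Y=y\mid X=x)=\prob(\gamma(Y)=x\mid X=x)\ge 1-\epsilon$, so $D_x\in D_\epsilon(x)$ and thus $(x,D_x)\in V$. Now let $I=\{(x,D_x):x\in\underline{\cal{X}}\}$. Two distinct nodes $(x,D_x),(x',D_{x'})\in I$ satisfy $x\ne x'$ (distinct codewords) and $D_x\cap D_{x'}=\emptyset$ (because $\gamma$ is a function, so the pre-images of distinct codewords are disjoint); therefore neither of the two adjacency conditions triggers and $I$ is independent. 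Hence $\alpha(G_\epsilon)\ge |I|=|\underline{\cal{X}}|=2^{C_\epsilon}$.

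Combining the two bounds yields $C_\epsilon=\log(\alpha(G_\epsilon))$. There is no real obstacle: the only subtlety is verifying that extending $\gamma$ arbitrarily on outputs $y$ outside $\bigcup_{(x,D)\in I}D$ does not break the error guarantee, which is immediate since the pre-image of every codeword in the independent-set-based decoder is a superset of the corresponding $D$. The hypothesis $\epsilon<1$ plays no role in the argument beyond ensuring the condition $\sum_{y\in D}\prob(Y=y\mid X=x)\ge 1-\epsilon$ is nontrivial.
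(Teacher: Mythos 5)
Your proposal is correct and follows essentially the same route as the paper's own proof: the lower bound by reading a codebook and disjoint-pre-image decoder off a maximum independent set (extending $\gamma$ arbitrarily outside the covered outputs), and the upper bound by mapping an admissible pair $(\underline{\cal{X}},\gamma)$ to the independent set $\{(x,\gamma^{-1}(x)):x\in\underline{\cal{X}}\}$. Your remarks on well-definedness and on the harmlessness of the arbitrary extension match the paper's verifications, so there is nothing to add.
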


We prove this theorem by establishing first that one can transmit a codebook of size at least $\alpha(G_{\epsilon})$ with a single use of the channel. We then show that this is the best one can do.

\begin{lemma}
$C_{\epsilon}\geq \log(\alpha(G_{\epsilon}))$.
\end{lemma}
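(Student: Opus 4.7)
The plan is to construct an explicit $\epsilon$-admissible pair $(\underline{\cal{X}},\gamma)$ whose codebook has size $\alpha(G_{\epsilon})$, starting from a maximum independent set of $G_{\epsilon}$. The independent set structure is tailor-made for this: the two types of edges in $G_{\epsilon}$ are precisely the two obstructions to turning a collection of pairs $(x,D)$ into a valid codebook and decoder, so ruling them out simultaneously should give exactly what we need.

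Concretely, I would let $I=\{(x_1,D_1),\dots,(x_k,D_k)\}$ be a maximum independent set in $G_{\epsilon}$, with $k=\alpha(G_{\epsilon})$. The ``$x=x'$'' edge rule guarantees that the symbols $x_1,\dots,x_k$ are pairwise distinct, so I can define the codebook $\underline{\cal{X}}=\{x_1,\dots,x_k\}$ of size exactly $k$. The ``$D\cap D'\neq\emptyset$'' edge rule guarantees that $D_1,\dots,D_k$ are pairwise disjoint subsets of $\cal{Y}$, so I can unambiguously define a decoding function
\[
\gamma(y)=\begin{cases} x_i & \text{if } y\in D_i,\\ x_1 & \text{if } y\notin \bigcup_{j=1}^{k} D_j,\end{cases}
\]
where the choice on the complement of $\bigcup_j D_j$ is arbitrary (any fixed element of $\underline{\cal{X}}$ works).

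To verify $\epsilon$-admissibility, I would note that by construction $\gamma^{-1}(x_i)\supseteq D_i$, and since $(x_i,D_i)$ is a node of $G_{\epsilon}$ we have $D_i\in D_{\epsilon}(x_i)$, hence
\[
\prob(\gamma(Y)=x_i\mid X=x_i)\;\ge\;\sum_{y\in D_i}\prob(Y=y\mid X=x_i)\;\ge\;1-\epsilon.
\]
Taking the maximum of $\prob(\gamma(Y)\ne x_i\mid X=x_i)\le\epsilon$ over all $i$ shows $(\underline{\cal{X}},\gamma)\in{\cal A}_\epsilon$, and therefore $C_\epsilon\ge\log|\underline{\cal{X}}|=\log(\alpha(G_\epsilon))$.

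There isn't really a hard step here; the proof is essentially a translation exercise, and the only subtlety is making sure the definition of $\gamma$ is well-defined on all of $\cal{Y}$ (handled by the arbitrary extension off $\bigcup_j D_j$) and that both edge types of $G_\epsilon$ are actually needed and used, one to get a codebook of the right size and the other to get a legitimate (single-valued) decoder. The heavier lifting will come in the converse direction, which is why this lemma is stated separately from the matching upper bound.
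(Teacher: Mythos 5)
Your proof is correct and follows essentially the same route as the paper's: both extract the codebook from the distinct first coordinates of a maximum independent set, use the disjointness of the $D$-sets to define a well-defined decoder (with an arbitrary assignment on uncovered outputs), and bound the maximum error via $\gamma^{-1}(x_i)\supseteq D_i$ together with $D_i\in D_\epsilon(x_i)$. No gaps to report.
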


\begin{proof}
Let $G_{\epsilon}=(V,E_{\epsilon})$ be the maximum-one-shot graph of the channel and let $I_{G_\epsilon}$ be a maximum independent set in $G_\epsilon$. Let $\cal{X}^*$ be the set of symbols in $\cal{X}$ that are represented in $I_{G_\epsilon}$, i.e. \vspace{-0.2cm}$$\textrm{$\cal{X}$}^* =\{x \in \textrm{$\cal{X}$}: \exists D=(y_1,\dots,y_k) \textrm{ such that } (x,D)\in I_{G_\epsilon} \}.\vspace{-0.2cm}$$

For each $x\in \cal{X}^*$, let $d(x)$ be the set of output symbols that are represented in the same node as $x$ in $I_{G_\epsilon}$, i.e. \vspace{-0.2cm}$$d(x)=\{y_1,\dots,y_k: (x,D)\in I_{G_\epsilon} \textrm{ with }D=(y_1,\dots,y_k)\} .\vspace{-0.2cm}$$ Notice that, since $I_{G_\epsilon}$ is an independent set in $G_\epsilon$ and all pairs of nodes of the form $(x,y_1,\dots,y_k)$ and $(x,y'_1,\dots,y'_{k'})$ are connected in $G_\epsilon$, we have that $d(x)$ is unique and properly defined. Let $\textrm{$\cal{Y}$}^*=\{ y\in \cal{Y}:$$ \exists x \in \cal{X}^*$$\textrm{ such that } y\in d(x) \}$.

Now, consider the decoder $\gamma(.)$ constructed as follows:

\begin{itemize}
\item for $y\in \cal{Y}^*$, we set $\gamma(y)=x'$, where $x'\in \cal{X}^*$ is such that $y\in d(x')$;
\item for $y\notin \cal{Y}^*$, we set $\gamma(y)=x^*$, where $x^*$ is some symbol in $\cal{X}^*$.
\end{itemize}

We have that $I_{G_\epsilon}$ is an independent set in $G_\epsilon$. Thus, for every $y\in \cal{Y}^*$, there is only one $x' \in \cal{X}^*$ such that $y\in d(x')$. Therefore, the function $\gamma(.)$ is well-defined, i.e. $\forall y\in {\cal Y},\exists!x\in {\cal X}^*:\gamma(y)=x.$ We also have that $\forall x \in \cal{X}^*,$$\exists y \in \cal{Y}:$$\gamma(y)=x $, which is equivalent to $\gamma(\cal{Y})=\cal{X}^*$. Let $\underline{\cal{X}}=\cal{X}^*$. We have that $\alpha(G_{\epsilon})=|I_{G_\epsilon}|=|\cal{X}^*|$ and, therefore, $|\underline{\cal{X}}|=\alpha(G_{\epsilon})$.

Now, we need to analyze the error probability of the pair $(\underline{\cal{X}},\gamma)$ previously constructed. Let $x \in \underline{\cal{X}}$ and let $\gamma^{-1}(x)=\{y\in \textrm{$\cal{Y}$}: \gamma(y)=x \}=\{y_1,\dots,y_k\}$. We have that \vspace{-0.25cm}$$\prob(\gamma(Y)\neq x|X=x)=1-\displaystyle \sum_{i=1}^{k} \prob(Y=y_i|X=x).\vspace{-0.2cm}$$ Notice that, by the construction of $\gamma(\cdot)$, we have that, for $D=(y_1,\dots,y_k)$, $(x,D)$ is a node of $I_{G_\epsilon}$ and, therefore, a node in $G_\epsilon$. Thus, by the definition of the maximum-one-shot graph $G_\epsilon$, we have that $(y_1,\dots,y_k)\in D_{\epsilon}(x)$, which is equivalent to $\sum_{i=1}^{k} \prob(Y=y_i|X=x)\geq 1-\epsilon.$ Therefore, we have that $\prob(\gamma(Y)\neq x|X=x)\leq \epsilon$, and this inequality is not dependent on the choice of $x \in \underline{\cal{X}}$. Therefore, we have that $\forall x\in \underline{\cal{X}}$, $\prob(\gamma(Y)\neq x|X=x)\leq \epsilon$, which is equivalent to $\epsilon_{\underline{\cal{X}},\gamma}\leq \epsilon.$ Thus, we have constructed a pair $(\underline{\cal{X}},\gamma)$ such that $\epsilon_{\underline{\cal{X}},\gamma}\leq \epsilon$ and $|\underline{\cal{X}}|=\alpha(G_{\epsilon})$, which implies that $C_{\epsilon}\geq \log(\alpha(G_{\epsilon}))$.
\end{proof}

We proved that one can transmit $\alpha(G_{\epsilon})$ symbols with a single use of the channel. Now, we prove that it is not possible to transmit more than that.

\begin{lemma}
$C_{\epsilon}\leq \log(\alpha(G_{\epsilon}))$.
\end{lemma}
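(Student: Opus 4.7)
The plan is to start with an optimal admissible pair $(\underline{\cal{X}},\gamma)$ realizing $C_\epsilon$ and to extract from it an independent set in $G_\epsilon$ of size $|\underline{\cal{X}}|$. Specifically, for each $x\in\underline{\cal{X}}$ I would look at its inverse image $\gamma^{-1}(x)=\{y\in\cal{Y}:\gamma(y)=x\}$ and form the candidate vertex $(x,\gamma^{-1}(x))$. The first thing to verify is that each such pair is actually a node of $G_\epsilon$, i.e.\ that $\gamma^{-1}(x)\in D_\epsilon(x)$. This is immediate from admissibility: since $\epsilon_{\underline{\cal{X}},\gamma}\le\epsilon$, we have $\prob(\gamma(Y)\ne x\mid X=x)\le\epsilon$, which rearranges to $\sum_{y\in\gamma^{-1}(x)}\prob(Y=y\mid X=x)\ge 1-\epsilon$.

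Next I would define $S=\{(x,\gamma^{-1}(x)):x\in\underline{\cal{X}}\}$ and argue that $S$ is an independent set in $G_\epsilon$. Take two distinct nodes $(x,\gamma^{-1}(x))$ and $(x',\gamma^{-1}(x'))$ in $S$ with $x\ne x'$. By the definition of $G_\epsilon$, an edge would require either the first coordinates to coincide (ruled out by $x\ne x'$) or the second coordinates to intersect. But $\gamma$ is a function, so the preimages $\gamma^{-1}(x)$ and $\gamma^{-1}(x')$ are disjoint whenever $x\ne x'$. Hence $S$ contains no edges, $|S|=|\underline{\cal{X}}|$, and therefore $|\underline{\cal{X}}|\le\alpha(G_\epsilon)$, yielding $C_\epsilon\le\log(\alpha(G_\epsilon))$.

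I do not expect any real obstacle here: the construction is essentially the inverse of the one used in the lower bound, and the two conditions imposed by the edge set of $G_\epsilon$ line up exactly with the two structural facts one has for free (distinct messages and disjoint preimages under a function). The only subtlety worth stating cleanly is that distinct messages of $\underline{\cal{X}}$ give distinct nodes of $S$ (so that $|S|=|\underline{\cal{X}}|$), which is guaranteed because the first coordinate $x$ already distinguishes them. No probabilistic or combinatorial work beyond this is required.
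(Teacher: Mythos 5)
Your proposal is correct and follows essentially the same route as the paper's own proof: both take an optimal admissible pair, observe that admissibility forces $\gamma^{-1}(x)\in D_\epsilon(x)$ so that $\{(x,\gamma^{-1}(x)):x\in\underline{\cal{X}}\}$ consists of nodes of $G_\epsilon$, and then use disjointness of preimages under the function $\gamma$ to conclude this set is independent, giving $|\underline{\cal{X}}|\leq\alpha(G_\epsilon)$. No gaps; the one subtlety you flag (distinct messages give distinct nodes) is handled identically in the paper.
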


\begin{proof}
Let $(\underline{\cal{X}},\gamma)$ be a pair such that $|\underline{\cal{X}}|=2^{C_{\epsilon}}$ and $\epsilon_{\underline{\cal{X}},\gamma}\leq \epsilon$. Let $\gamma^{-1}(x)=\{y\in \textrm{$\cal{Y}$}:\gamma(y)=x \}$. Since $\epsilon_{\underline{\cal{X}},\gamma} \leq \epsilon$, we have that $\forall x\in \underline{\cal{X}}$, $\displaystyle \sum_{i=1}^k \prob(Y=y_i|X=x)\geq 1-\epsilon,$ where $\{y_1,\dots,y_k\}=\gamma^{-1}(x)$. Thus, $(y_1,\dots,y_k)\in D_{\epsilon}(x)$ and, therefore, for $D=(y_1,\dots,y_k)$, $(x,D)$ is a node in the maximum-one-shot graph $G_\epsilon=(V,E)$.

Now, notice that, since $\gamma(.)$ is a function with $\cal{Y}$ as domain, we have that $\forall x_1\neq x_2 \in \underline{\cal{X}}$, $\gamma^{-1}(x_1)\cap \gamma^{-1}(x_2)=\emptyset$. Therefore, the set $I=\{(x,D): x\in \textrm{$\underline{\cal{X}}$ and }  D=\gamma^{-1}(x)\}$ is an independent set in $G_\epsilon$, which implies that $|I|\leq \alpha(G_\epsilon)$. Since $|I|=|\underline{\cal{X}}|$, we have that $|\underline{\cal{X}}|\leq \alpha(G_\epsilon)$ and, therefore, $2^{C_\epsilon}\leq \alpha(G_\epsilon)$.
\end{proof}

\section{Complexity of the Computation of the One-Shot Capacity}
\label{sec:nphard}

Up to now we have shown that the $\epsilon$-error one-shot capacity can be characterized by the independence number of the graph $G_\epsilon$. Computing the independence number is known to be an NP-Hard problem. However, it may be the case that the graphs $G_\epsilon$ we obtain in our reduction are of a {\em simple} nature allowing us to find their independence number efficiently. In what follows we show that this is not the case.

\begin{theorem}
\label{th:nphard} 
The computation of the $\epsilon$-maximum one-shot capacity is NP-Hard for $\epsilon <1/3$.
\end{theorem}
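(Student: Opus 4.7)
The plan is to reduce \emph{Maximum Independent Set on $3$-regular graphs} -- known to be NP-Hard -- to the computation of $C_\epsilon$. Fix any $\epsilon<1/3$. Given a $3$-regular graph $H=(V,E)$, I would construct, in polynomial time, the channel with input alphabet $\mathcal{X}=V$, output alphabet $\mathcal{Y}=E$, and transition probabilities $P(Y=y_e\mid X=x_v)=1/3$ whenever the edge $e$ is incident to the vertex $v$, and $0$ otherwise. Since $H$ is $3$-regular, these probabilities sum to one for every input.

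The core step is to describe $D_\epsilon(x_v)$ exactly. Writing $E(v)$ for the three edges incident to $v$, a set $D\subseteq\mathcal{Y}$ belongs to $D_\epsilon(x_v)$ iff $|D\cap E(v)|/3\geq 1-\epsilon$. For $\epsilon<1/3$ this forces $|D\cap E(v)|>2$, hence $|D\cap E(v)|=3$, i.e.\ $E(v)\subseteq D$; conversely any $D\supseteq E(v)$ qualifies. Therefore every node $(x_v,D)$ of the maximum-one-shot graph $G_\epsilon$ satisfies $D\supseteq E(v)$.

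Using this, I would show $\alpha(G_\epsilon)=\alpha(H)$. For the lower bound, any independent set $S$ in $H$ lifts to the independent set $\{(x_v,E(v)):v\in S\}$ in $G_\epsilon$ of the same size, since distinct $u,v\in S$ are non-adjacent in $H$ and hence $E(u)\cap E(v)=\emptyset$. For the upper bound, take an independent set $I$ in $G_\epsilon$; because nodes sharing their first coordinate are always connected in $G_\epsilon$, the first coordinates of the elements of $I$ are distinct. For two such coordinates $u\neq v$ the corresponding $D_u,D_v$ must be disjoint, but $D_u\supseteq E(u)$ and $D_v\supseteq E(v)$; if $\{u,v\}$ were an edge of $H$ it would lie in both $E(u)$ and $E(v)$, contradicting disjointness. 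Hence the projection of $I$ to first coordinates is an independent set in $H$ of size $|I|$.

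Combining this identity with Theorem~\ref{th:main} yields $C_\epsilon=\log\alpha(G_\epsilon)=\log\alpha(H)$, so any algorithm for $C_\epsilon$ solves Max Independent Set on cubic graphs, establishing NP-Hardness. The delicate point is the threshold $\epsilon<1/3$: together with $3$-regularity, the strict inequality $3(1-\epsilon)>2$ is exactly what forces $D\supseteq E(v)$, making adjacency in $H$ coincide with mandatory overlap in $G_\epsilon$. At the boundary $\epsilon=1/3$ each $D_\epsilon(x_v)$ could drop one incident edge, breaking the bijection with $\alpha(H)$, so the regularity and the strict bound on $\epsilon$ must be used in tandem.
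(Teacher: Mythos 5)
Your proposal is correct and follows essentially the same route as the paper: the identical reduction from maximum independent set on $3$-regular graphs via the channel with $\mathcal{X}=V$, $\mathcal{Y}=E$ and uniform $1/3$ transition probabilities, the observation that $\epsilon<1/3$ forces every decoding set for $x_v$ to contain all three incident edges, and the same two-way correspondence showing $\alpha(G_\epsilon)=\alpha(H)$. Your explicit characterization of $D_\epsilon(x_v)$ and the remark on why the argument breaks at $\epsilon=1/3$ are slightly more detailed than the paper's treatment, but the argument is the same.
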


\begin{proof}
We will prove the NP-Hardness of the $\epsilon$-maximum one-shot capacity problem by reducing the independent set problem in $3$-regular graphs (a known NP-Hard problem~\cite{stockmeyer:76}) to an instance of the $\epsilon$-maximum one-shot capacity problem, for $\epsilon<1/3$. The reduction technique is similar to the one used in \cite{lovasz:79}.

Consider a $3$-regular graph $G=(V,E)$. We will construct a communication channel driven from this graph as follows: the input alphabet is $\cal{X}$$=V$, the output alphabet is $\cal{Y}$$=E$ and the transition probability distribution is given by
\vspace{-0.15cm}
\begin{displaymath}
\prob(Y=y|X=x) = \left\{ \begin{array}{ll}
 1/3 & \textrm{if $x$ is an endpoint of $y$}\\
 0 & \textrm{otherwise}
  \end{array} \right.
\end{displaymath}

Notice that $\prob_{Y|X}$ is well-defined, since $G$ is a $3$-regular graph (each node has degree $3$) and, therefore, $\forall x\in \cal{X},$ $\sum_{y\in \cal{Y}}\prob(Y=y|X=x)=1$. For each $x \in \cal{X}$, let $d(x)=\{y \in \textrm{$\cal{Y}$}: \prob(Y=y|X=x)>0\}$. Notice that $|d(x)|=3$ and $\forall y \in  d(x), \prob(Y=y|X=x)=1/3$.

Now, we shall focus on the $\epsilon$-maximum one-shot capacity of the previously constructed channel.
Let $\epsilon <1/3$. Let us now construct the maximum-one-shot graph $G_\epsilon=(V_\epsilon,E_\epsilon)$. The node set is composed of elements of the form $(x,y_1,\dots,y_k)$ such that $\displaystyle\sum_{i=1}^k \prob(Y=y_i|X=x)\geq 1-\epsilon \: (>2/3).$ Two nodes $(x,y_1,\dots,y_k)$ and $(x',y'_1,\dots,y'_{k'})$ are connected in $G_\epsilon$ if and only if $x=x'$ or $\exists i,j$ such that $y_i=y'_j$.
As $\epsilon < 1/3$, notice that for any node $(x,D)$ in $G_\epsilon$ it holds that $d(x) \subseteq D$.

We now show that $\alpha(G_\epsilon)$=$\alpha(G)$.
This suffices to prove our assertion since computing the independence number of $G$ is NP-Hard and the $\epsilon$-maximum one-shot capacity is equal to the (logarithm of the) independence number of $G_\epsilon$.
Namely, we prove that a maximum independent set in $G_\epsilon$ corresponds to a maximum independent set in the original $3$-regular graph $G$, and vice-versa. 

Let $I_\epsilon$ be an independent set in $G_\epsilon$.
Consider the set $I=\{x:\exists D \ \mbox{s.t.}\ (x,D)\in I_\epsilon\}$.
It holds that $|I|=|I_\epsilon|$.
Moreover, for any two nodes $(x,D)$ and $(x',D')$ in $I_\epsilon$ it holds that $d(x) \subseteq D$, $d(x') \subseteq D'$ and $D \cap D' = \phi$.
This implies that $d(x) \cap d(x') = \phi$, which in turn implies that $x$ and $x'$ are not connected by an edge.
We conclude that $I$ is an independent set in $G$.

For the other direction, Let $I$ be an independent set in $G$.
Consider the set $I_\epsilon=\{(x,d(x)):x \in I\}$.
It holds that $|I|=|I_\epsilon|$.
Moreover, for any two nodes $x$ and $x'$ in $I$ it holds that $d(x) \cap d(x')=\phi$.
This implies that $(x,d(x))$ and $(x',d(x'))$ are not connected by an edge in $G_\epsilon$.
We conclude that $I_\epsilon$ is an independent set in $G_\epsilon$.
\end{proof}

\section{The Case of Average Error Probability}
\label{sec:average}

In this section, we devote our attention to the $\epsilon$-average one-shot capacity (\definref{def:average} in \secref{sec:problem}).

\begin{definition}
For each $x\in \cal{X}$, we define \vspace{-0.15cm}$$D(x)=\left\{D \subset {\cal{Y}}: \displaystyle\sum_{y \in D}\prob(Y=y|X=x)>0 \right\}.$$
\end{definition}

We are now ready to present the definition of the {\em average}-one-shot graph of the channel described by $P_{Y|X}$.

\begin{definition}
The {\it average-one-shot graph} of the channel described by $P_{Y|X}$ is the weighted graph $G_\epsilon$ (with node set $V$ and edge set $E_\epsilon$), constructed as follows:
\vspace{-0.2cm}
\begin{itemize}
\item the nodes are the elements of the form $(x,D)$ with $x\in \cal{X}$ and $D \in D(x)$;
\item two nodes $(x,D)$ and $(x',D')$ are connected by an infinite weight edge if and only if $x=x'$ or $D\cap D' \neq \emptyset$.
\item all other pairs of nodes $(x,D)$ and $(x,D')$ are connected by an edge of weight $\prob(Y\not \in D|X=x) +  \prob(Y\not \in D'|X=x').$
\end{itemize}
\end{definition}

The previous definition provides us a tool to describe a relationship between the $\epsilon$-average one-shot capacity and sparse sets in the average-one-shot graph.

\begin{definition}
Consider a weighted graph $G=(V,E)$. An {\it $\epsilon$-sparse set} $I_{\epsilon}$ in $G$ is a set of nodes $v\in V$ for which the weight of edges in the subgraph induced on $I_\epsilon$ is at most $\epsilon |I_\epsilon|(|I_\epsilon|-1)$. For example, a $0$-sparse set is an independent set.
\end{definition}

\begin{definition}
Consider a graph $G=(V,E)$. A {\it maximum $\epsilon$-sparse set} $I_\epsilon$ is a largest set in $G$ that is $\epsilon$-sparse, its cardinality is called the {\it $\epsilon$-sparse number} of the graph $G$, and is denoted by $\alpha_\epsilon(G)$.
\end{definition}

We are now ready to present our main result related to the $\epsilon$-average one-shot capacity.

\begin{theorem}
\label{th:average}
Let $\epsilon \in [0,1]$.
Consider a channel described by $P_{Y|X}$ and let $G_\epsilon=(V,E)$ be the average-one-shot graph. The $\epsilon$-average one-shot capacity is given by $\bar{C}_{\epsilon}=\log(\alpha_\epsilon(G_\epsilon)).$
\end{theorem}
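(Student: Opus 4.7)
My plan is to mirror the proof of \theorref{th:main}, establishing a direct correspondence between $\epsilon$-sparse sets in $G_\epsilon$ and average-$\epsilon$-admissible codes. The key identity driving both directions is the following: for any $m$ nodes $(x_1,D_1),\dots,(x_m,D_m)$ of $G_\epsilon$ sharing no infinite-weight edge (equivalently, with distinct $x_i$ and pairwise disjoint $D_i$), the total weight of finite edges in the induced subgraph satisfies
$$\sum_{1\leq i<j\leq m}\bigl[\prob(Y\notin D_i|X=x_i)+\prob(Y\notin D_j|X=x_j)\bigr]=(m-1)\sum_{k=1}^m\prob(Y\notin D_k|X=x_k),$$
since each index $k$ is paired with exactly $m-1$ others. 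Interpreting $\{x_i\}$ as a codebook and $D_i$ as $\gamma^{-1}(x_i)$, the right-hand side equals $m(m-1)\,\bar\epsilon_{\underline{\cal X},\gamma}$, so the sparseness bound $\leq\epsilon m(m-1)$ becomes exactly the admissibility condition $\bar\epsilon\leq\epsilon$.

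For achievability ($\bar{C}_\epsilon\geq\log\alpha_\epsilon(G_\epsilon)$), I would take a maximum $\epsilon$-sparse set $I_\epsilon$; since an infinite-weight edge inside a set of size at least two destroys sparseness, the nodes of $I_\epsilon$ have distinct first components and pairwise disjoint second components. I let $\underline{\cal X}$ be the set of these first components, define $\gamma(y)=x$ on each $y\in D$ for the unique $(x,D)\in I_\epsilon$ covering it, and route every remaining $y$ to one fixed codeword (which can only shrink that codeword's error by enlarging its preimage). The identity above then delivers $\bar\epsilon_{\underline{\cal X},\gamma}\leq\epsilon$ together with $|\underline{\cal X}|=|I_\epsilon|=\alpha_\epsilon(G_\epsilon)$.

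For the converse ($\bar{C}_\epsilon\leq\log\alpha_\epsilon(G_\epsilon)$), I take any admissible pair $(\underline{\cal X},\gamma)$ of maximal size $m=2^{\bar{C}_\epsilon}$ and form $I=\{(x,\gamma^{-1}(x)):x\in\underline{\cal X}\}$. Because $\gamma$ is a function, the preimages are pairwise disjoint and the $x$'s are distinct, so $I$ induces no infinite-weight edge; the identity then bounds its induced weight by $m(m-1)\bar\epsilon_{\underline{\cal X},\gamma}\leq\epsilon m(m-1)$, certifying $I$ as $\epsilon$-sparse and yielding $m\leq\alpha_\epsilon(G_\epsilon)$. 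The main obstacle I anticipate is a membership technicality: $(x,\gamma^{-1}(x))$ is a node of $G_\epsilon$ only when $\prob(Y\in\gamma^{-1}(x)|X=x)>0$, so a ``useless'' codeword contributing a full unit to $\bar\epsilon$ must be handled separately, most cleanly by interpreting $D(x)$ as the family of all non-empty subsets of $\cal Y$, which turns the above correspondence into a genuine bijection and makes both inequalities tight.
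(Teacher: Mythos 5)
Your proposal takes essentially the same route as the paper's proof: the same achievability construction (decode each $y\in D_x$ to $x$ via a maximum $\epsilon$-sparse set, routing leftover outputs to a fixed codeword) and the same converse set $I=\{(x,\gamma^{-1}(x)):x\in\underline{\cal{X}}\}$, with your standalone pairing identity being exactly the double-counting step the paper performs inline in its displayed chains of inequalities. Your closing observation---that $(x,\gamma^{-1}(x))$ fails to be a node of $G_\epsilon$ when $\prob(Y\in\gamma^{-1}(x)|X=x)=0$, so ``useless'' codewords need separate treatment---identifies a genuine technicality that the paper's converse silently glosses over, and your proposed repair is sound.
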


\begin{proof}
Let $G_\epsilon=(V,E)$ be the average-one-shot graph of the channel and let $I_\epsilon$ be a maximum $\epsilon$-sparse set in $G_\epsilon$. Let $\cal{X}^*$ be the set of symbols in $\cal{X}$ that are represented in $I_{\epsilon}$, i.e. $\textrm{$\cal{X}$}^* =\{x \in \textrm{$\cal{X}$}: \exists D \textrm{ such that } (x,D)\in I_{\epsilon} \}.$
Notice that $I_\epsilon$ cannot contain two vertices $(x,D)$ and $(x,D')$ (as they share an edge of infinite weight), or two vertices $(x,D)$ and $(x',D')$ with $D \cap D' \ne \phi$ (for the same reason).

For each $x\in \cal{X}^*$, let $(x,D_x) \in I_\epsilon$. Let $\textrm{$\cal{Y}^*$}=\cup_{x \in \cal{X}^*} D_x$. Now, consider the decoder $\gamma(.)$ constructed as follows:
\vspace{-0.1cm}
\begin{itemize}
\item for $y\in D_x$, we set $\gamma(y)=x$.
\item for $y\notin \cal{Y}^*$, we set $\gamma(y)=x^*$, where $x^*$ is some symbol in $\cal{X}^*$.
\end{itemize}

We have that $I_{\epsilon}$ is an $\epsilon$-sparse set in $G$.
\vspace{-0.2cm} 
{\small{
\begin{eqnarray*}
\epsilon 
&\geq & 
\frac{1}{|I_\epsilon|(|I_\epsilon|-1)}\sum_{x \ne x' \in \cal{X}^*}\prob(Y\not \in D_x|X=x)+\\ &&\qquad \qquad \qquad \qquad \qquad +  \prob(Y\not \in D_{x'}|X=x') \\
&=& 
\frac{|I_\epsilon|-1}{|I_\epsilon|(|I_\epsilon|-1)}\sum_{x \in \cal{X}^*}\prob(Y\not \in D_x|X=x) \\
&=& 
\frac{1}{|I_\epsilon|}\sum_{x \in \cal{X}^*}\prob(\gamma(Y) \ne x|X=x) 
\end{eqnarray*}}}
\vspace{-0.3cm}

This implies that $\bar{C}_{\epsilon}\geq \log{|I_\epsilon|} = \log(\alpha_\epsilon(G))$. For the other direction, let $(\underline{\cal{X}},\gamma)$ be a pair in $\bar{{\cal {A}}_\epsilon}$ such that $\bar{C}_\epsilon = \log{|\underline{\cal{X}}|}$.
Let $I_\epsilon = \{(x,\gamma^{-1}(x))\mid x\in \underline{\cal{X}}\}$.
Clearly, $|I_\epsilon|=|\underline{\cal{X}}|$.
We now show (very similar to the analysis above) that $I_\epsilon$ is $\epsilon$-sparse.
Namely, first notice that $I_\epsilon$ does not contain any infinite weight edges (as $\gamma$ is a decoding function).
Moreover
\vspace{-0.2cm}
{\small{
\begin{eqnarray*}
\epsilon 
&\geq & 
\frac{1}{|\underline{\cal{X}}|}\sum_{x \in \underline{\cal{X}}}\prob(\gamma(Y) \ne x|X=x) \\
&=& 
\frac{1}{|I_\epsilon|}\sum_{x \in I_\epsilon}\prob(\gamma(Y) \ne x|X=x) \\
& =& 
\frac{|I_\epsilon|-1}{|I_\epsilon|(|I_\epsilon -1|)}\sum_{x \in I_\epsilon}\prob(Y\not \in \gamma^{-1}(x)|X=x) \\
&=& 
\frac{1}{|I_\epsilon|(|I_\epsilon|-1)}\sum_{x \ne x' \in I_\epsilon}\prob(Y\not \in D_x|X=x)+\\ &&  \qquad \qquad \qquad \qquad \qquad +  \prob(Y\not \in D_{x'}|X=x')
\end{eqnarray*}}}
\vspace{-0.4cm}

We conclude that $\log{\alpha_\epsilon(G)} \geq \log{|\underline{\cal{X}}|} = \bar{C}_\epsilon$, which concludes our proof.
\end{proof}

\section{Conclusions}
\label{sec:conclu}

Intrigued by the capacity of discrete channels that can be used only once, we elaborated on the $\epsilon$-one-shot capacity, defined as the maximum number of bits that can be transmitted with one channel use while assuring that the decoding error probability is not greater than $\epsilon$. Based on this definition, we introduced the concept of the $\epsilon$-one-shot graph associated with a discrete channel and provided an exact characterization of the $\epsilon$-one-shot capacity through combinatorial properties of the $\epsilon$-one-shot graph. Using this formulation, we prove that computing the $\epsilon$-one-shot capacity (for $\epsilon<1/3$) is NP-Hard.

The practical relevance of the concept we present in this paper was discussed through a non-trivial example of a class of discrete channels for which the zero-error capacity is null, but allowing for small error probability enables the transmission of a significant number of bits in a single use of the channel. 

\section*{Acknowledgment}

This work was partially funded by the Funda\c{c}\~ao para a Ci\^encia e Tecnologia (FCT, Portuguese Foundation for Science and Technology) under grants SFRH-BD-27273-2006 and PTDC/EIA/71362/2006 (WITS project), and by ISF grant 480/08.

\vspace{-0.4cm}
\bibliographystyle{IEEEbib}
\bibliography{pp}

\end{document}